\documentclass[11pt]{article}

\usepackage{amssymb,amsmath,amsfonts,amsthm,color}
\usepackage[margin=1in]{geometry}
\usepackage[pagebackref,letterpaper=true,colorlinks=true,pdfpagemode=none, linkcolor=blue,citecolor=blue,pdfstartview=FitH]{hyperref}
\usepackage{algorithm}
\usepackage{algorithmic}
\usepackage{xspace}

\bibliographystyle{plain}

\theoremstyle{definition}
\newtheorem{theorem}{Theorem}

\newtheorem{prop}[theorem]{Proposition}
\newtheorem{coro}[theorem]{Corollary}


\def\eps{\epsilon}
\def\suchthat{\;:\;}

\def\indic{\mathbb{I}}

\newcommand{\abs}[1]{\left|#1\right|}
\newcommand{\size}[1]{\left|#1\right|}
\newcommand{\norm}[1]{\left\|#1\right\|}
\newcommand{\fnorm}[1]{\left\|#1\right\|_{F}}

\newcommand{\inner}[2]{\left\langle#1, #2\right\rangle}


\newcommand{\sparsestcut}{\textsc{Sparsest Cut }}
\newcommand{\uniformSparsestcut}{\textsc{Uniform Sparsest Cut }}

\title{Guruswami-Sinop Rounding without Higher Level Lasserre}
\author{Amit Deshpande\thanks{Microsoft Research, {\tt amitdesh@microsoft.com}} \and Rakesh Venkat\thanks{Tata Institute of Fundamental Research, Mumbai, {\tt rakesh@tifr.res.in}}}
\date{}

\begin{document}
\maketitle

\begin{abstract}
Guruswami and Sinop \cite{GuruswamiS13} give a $O(1/\delta)$ approximation guarantee for the non-uniform \sparsestcut problem by solving $O(r)$-level Lasserre semidefinite constraints, provided that the generalized eigenvalues of the Laplacians of the cost and demand graphs satisfy a certain spectral condition, namely, $\lambda_{r+1} \geq \Phi^{*}/(1-\delta)$. Their key idea is a rounding technique that first maps a vector-valued solution to $[0, 1]$ using appropriately scaled projections onto Lasserre vectors. In this paper, we show that similar projections and analysis can be obtained using only $\ell_{2}^{2}$ triangle inequality constraints. This results in a $O(r/\delta^{2})$ approximation guarantee for the non-uniform \sparsestcut problem by adding only $\ell_{2}^{2}$ triangle inequality constraints to the usual semidefinite program, provided that the same spectral condition $\lambda_{r+1} \geq \Phi^{*}/(1-\delta)$ holds as above.
\end{abstract}

\section{Introduction}
Finding sparse cuts in graphs or networks is a difficult theoretical problem with numerous practical applications, namely, divide-and-conquer graph algorithms, image segmentation \cite{ShiM00,SinopG07a}, VLSI layout \cite{BhattL84}, routing in distributed networks \cite{AwerbuchP90a}. From the theoretical side, the problem of finding the sparsest cut in a given graph is NP-hard, and over the years, significant efforts and non-trivial ideas have gone into designing good approximation algorithms for it. The state of approximability questions for its variants such as \emph{conductance} or \emph{edge expansion} is also similar.

Let us first define the \sparsestcut problem formally. The input is a pair of graphs $C$, $D$ on the same vertex set $V$, with $\size{V} = n$, called the \emph{cost} and \emph{demand} graphs, respectively. They are specified by non-negative edge weights  $c_{ij}, d_{ij} \geq 0$, for $i<j \in [n]$, and the \emph{(non-uniform) sparsest cut} problem, henceforth referred to as \sparsestcut, asks for a subset $S \subseteq V$ that minimizes
\[
\Phi(S) = \frac{\sum_{i<j} c_{ij} \abs{\indic_{S}(i) - \indic_{S}(j)}}{\sum_{i<j} d_{ij} \abs{\indic_{S}(i) - \indic_{S}(j)}},
\]
where $\indic_{S}(i)$ is the indicator function giving $1$, if $i \in S$, and $0$, otherwise. We denote the optimum by $\Phi^{*} = \min_{S \subseteq V} \Phi(S)$. The special case of this problem where the demand graph is a complete graph on $n$ vertices with uniform edge weights is called the \uniformSparsestcut problem.

Several popular heuristics in practice for finding sparse cuts use spectral information such as the eigenvalues and eigenvectors of the underlying graph. The \emph{generalized eigenvalues} of the Laplacian matrices of the cost and demand graphs, defined later in Section \ref{sec:prelim}, provide a natural scale against which we can measure the sparsity. If $0 \leq \lambda_{1} \leq \lambda_{2} \leq \dotsc \leq \lambda_{m}$ are the \emph{generalized eigenvalues} of the Laplacian matrices of cost and demand graphs, then using Courant-Fisher theorem (or the easy direction of Cheeger's inequality) we get $\lambda_{1} \leq \Phi^{*}$. So the smallest generalized eigenvalue is at most $\Phi^{*}$, and as we go to the higher eigenvalues, at some point they overtake $\Phi^{*}$. We provide an approximation guarantee of
\[
r~ \left(1 - \frac{\Phi^{*}}{\lambda_{r+1}}\right)^{-2}
\]
for the \sparsestcut problem, provided that $\lambda_{r+1} \geq \Phi^{*}$. In particular, this gives $O(r/\delta^{2})$ approximation guarantee, if $\lambda_{r+1} \geq \Phi^{*}/(1-\delta)$. Our algorithm runs in time $\text{poly}(n)$ and needs to solve a semidefinite program with only $\ell_{2}^{2}$ triangle inequality constraints. In comparison, Guruswami-Sinop \cite{GuruswamiS13} give an approximation guarantee of
\[
\left(1 - \frac{(1+\eps) \Phi^{*}}{\lambda_{r+1}}\right)^{-1},
\]
provided that $\lambda_{r+1} \geq (1+\eps) \Phi^{*}$, but require solving a semidefinite program with $O(r/\eps)$ level Lasserre constraints, and hence, $2^{r/\delta\eps} \text{poly}(n)$ running time \cite{GuruswamiS12b}.

\subsection{Our Results}
Our main result, proved later in Section \ref{Sec:SPCut}, is as follows:

\begin{theorem} \label{thm:MainThm}[Main Theorem]
Given an instance $C, D$ of the \sparsestcut problem, Algorithm \ref{sp-cut-alg} outputs a cut $T$ that satisfies
\[
\Phi(T) \leq \min_{r\in[n]} \quad r\left( 1 - \frac{\Phi^{*}}{\lambda_{r+1}}\right)^{-2} \Phi^{*}.
\]
The algorithm runs in time $\text{poly}(n)$ and needs to solve a semidefinite program with only additional $\ell_{2}^{2}$ triangle inequality constraints.
\end{theorem}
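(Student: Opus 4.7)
My plan is to mimic the Guruswami--Sinop rounding skeleton but replace every use of higher-level Lasserre moments by a suitable use of $\ell_2^2$ triangle inequalities. I would first solve the standard SDP relaxation of \sparsestcut with $\ell_2^2$ triangle inequalities, producing unit-type vectors $v_1,\dots,v_n$ with $\|v_i-v_j\|^2+\|v_j-v_k\|^2\geq\|v_i-v_k\|^2$ for all triples. Denote the SDP value by $\Phi_{\mathrm{SDP}}\leq\Phi^{*}$. Fix $r$ and assume $\lambda_{r+1}>\Phi^{*}$, and let $P_{r}$ denote the projector onto the span of the bottom $r$ generalized eigenvectors of the cost-vs-demand Laplacian pencil. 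The first reduction is spectral: by a Courant--Fischer / Rayleigh-quotient argument, the ``high-frequency'' residual $(I-P_{r})v_i$ contributes to the cost Laplacian a quantity at least $\lambda_{r+1}$ times its contribution to the demand Laplacian, so replacing $v_i$ by $P_{r}v_i$ inside the SDP ratio costs at most a multiplicative $\bigl(1-\Phi^{*}/\lambda_{r+1}\bigr)^{-1}$ factor.

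Next I would carry out the column-selection / projection step in the spirit of \cite{GuruswamiS13}, but at the vector-SDP level. The goal is to find $r$ ``pivot'' vectors $u_1,\dots,u_r\in\conv{v_1,\dots,v_n,0}$ (or among the $v_i$ themselves) such that, letting $\Pi_k$ denote the suitably normalised projection onto $u_k$, the scalars $x_i^{(k)}=\inner{v_i}{u_k}/\norm{u_k}^2$ all lie in $[0,1]$ and approximate the $P_r$-component of $v_i$ well in the cost Laplacian. The $\ell_2^2$ triangle inequality plays the role that pseudo-characteristic vectors play in Lasserre: it forces $\inner{v_i}{u_k}$ to behave like an inner product of $\{0,1\}$-valued quantities, giving $x_i^{(k)}\in[0,1]$ and a one-sided Poincar\'{e}-type bound controlling $|x_i^{(k)}-x_j^{(k)}|$ in terms of $\|v_i-v_j\|^2$.

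Given the scalars $x_i^{(k)}$, the final step is standard threshold rounding: for each $k\in[r]$ and each $t\in\{x_1^{(k)},\dots,x_n^{(k)}\}$, form the cut $T_{k,t}=\{i:x_i^{(k)}\geq t\}$ and output the sparsest among these $O(rn)$ cuts, which Algorithm~\ref{sp-cut-alg} does in $\text{poly}(n)$ time. The analysis then combines three losses: (i) an averaging argument over the $r$ pivot directions contributes the factor $r$, since at least one direction captures a $1/r$-fraction of the SDP cost; (ii) the threshold-rounding step loses only a constant factor thanks to the $\ell_2^2$ Poincar\'{e}-style bound; (iii) the spectral truncation and the projection-error step each incur a factor $(1-\Phi^{*}/\lambda_{r+1})^{-1}$, multiplying to give the claimed $(1-\Phi^{*}/\lambda_{r+1})^{-2}$. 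Taking the minimum over $r$ yields the stated guarantee.

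The main obstacle I anticipate is step two: producing pivot vectors for which $\inner{v_i}{u_k}/\norm{u_k}^2\in[0,1]$ and simultaneously approximating $P_{r}v_i$ in the cost metric using only $\ell_2^2$ inequalities. In the Lasserre world, conditioning gives both properties essentially for free via the moment structure; here one must extract the analogous conclusions from bilinear manipulations of the triangle inequalities and from the spectral upper bound on the residual. The extra $\delta^{-1}$ over \cite{GuruswamiS13} in the final bound is exactly the price of this substitution, appearing because each ``moment-like'' step now costs a separate factor of $(1-\Phi^{*}/\lambda_{r+1})^{-1}$.
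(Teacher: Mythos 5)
Your proposal is a plan rather than a proof, and its central step --- constructing $r$ pivot vectors $u_1,\dots,u_r$ such that $\inner{v_i}{u_k}/\norm{u_k}^2\in[0,1]$ while approximating $P_r v_i$ --- is exactly the part you leave open and flag as ``the main obstacle.'' That is a genuine gap, and it is also where your route diverges from what actually makes the argument work. The paper never selects pivots and never projects the SDP solution onto the bottom generalized eigenspace. Algorithm \ref{sp-cut-alg} simply enumerates \emph{all} difference directions $x_k-x_l$ and takes the best sweep cut; the $[0,1]$ property of $p_i^{(k,l)}=\inner{x_i-x_l}{x_k-x_l}/\norm{x_k-x_l}^2$ follows directly from acuteness of angles (the $\ell_2^2$ inequalities), with no construction needed. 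The analysis then passes to an explicit $\ell_1$ embedding whose coordinates are indexed by pairs $(k,l)$ weighted by $d_{kl}\norm{x_k-x_l}^2$, uses Proposition \ref{prop:triangle} to sandwich $\norm{y_i-y_j}_1$ between $\sum_{k<l}d_{kl}\inner{x_i-x_j}{x_k-x_l}^2/\sum_{k<l}d_{kl}\norm{x_k-x_l}^2$ and $\norm{x_i-x_j}^2$, and obtains \emph{both} the factor $r$ and the factor $(1-\Phi^*/\lambda_{r+1})^{-2}$ from a single Cauchy--Schwarz step on the eigenvalues $\sigma_t$ of the Gram matrix of $\{\sqrt{d_{ij}}(x_i-x_j)\}$, combined with Proposition \ref{prop:laplacian}. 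Your accounting (factor $r$ from averaging over pivots, one $(1-\Phi^*/\lambda_{r+1})^{-1}$ from truncation and another from projection error) is not how the bound arises and is not justified.

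A concrete reason your first reduction would fail as stated: after replacing $v_i$ by $P_r v_i$, the projected vectors need not satisfy the $\ell_2^2$ triangle inequalities, so the very property you rely on in step two (that inner products with pivots land in $[0,1]$, and the contraction bound on $|x_i^{(k)}-x_j^{(k)}|$) is no longer available. Likewise, your claim that threshold rounding ``loses only a constant factor'' misplaces the loss: the sweep-cut step is lossless relative to the $\ell_1$ embedding (since $\ell_1$ metrics lie in the cut cone); all of the approximation loss in the paper is the distortion of the embedding itself, quantified by Proposition \ref{prop:distortion}. To repair your argument you would either have to supply the missing pivot construction together with a proof that the two spectral losses compose as claimed, or abandon the Guruswami--Sinop skeleton and adopt the enumerate-all-pairs $\ell_1$-embedding argument the paper uses.
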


Here is an immediate corollary that was mentioned in the abstract.

\begin{coro} \label{cor:mainCor}
If the input instance satisfies  $\lambda_{r+1} \geq \Phi^{*}/(1-\delta)$ for some $r\in[n]$, then the algorithm produces a $O(r/\delta^2)$ approximation. Here, $0\leq \lambda_1 \leq \ldots \leq \lambda_n$ are the \emph{generalized eigenvalues} of the Laplacians of $C$, $D$.
\end{coro}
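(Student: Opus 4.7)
The plan is to derive Corollary \ref{cor:mainCor} as a direct specialization of Theorem \ref{thm:MainThm}. Theorem \ref{thm:MainThm} establishes the bound $\Phi(T) \leq r\,\bigl(1 - \Phi^{*}/\lambda_{r+1}\bigr)^{-2}\,\Phi^{*}$ for every $r \in [n]$, so it suffices to show that the spectral hypothesis $\lambda_{r+1} \geq \Phi^{*}/(1-\delta)$ forces the factor $\bigl(1 - \Phi^{*}/\lambda_{r+1}\bigr)^{-2}$ to be at most $1/\delta^{2}$.

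First I would rearrange the assumed inequality $\lambda_{r+1} \geq \Phi^{*}/(1-\delta)$ into the equivalent form $\Phi^{*}/\lambda_{r+1} \leq 1-\delta$, which is legitimate because for $\delta \in (0,1)$ the hypothesis forces $\lambda_{r+1} > 0$ (the case $\Phi^{*}=0$ is trivial since then any optimal cut already has ratio zero). Subtracting from $1$ yields $1 - \Phi^{*}/\lambda_{r+1} \geq \delta > 0$, and inverting and squaring (both safe since $\delta > 0$) gives $\bigl(1 - \Phi^{*}/\lambda_{r+1}\bigr)^{-2} \leq 1/\delta^{2}$.

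Plugging this estimate into the guarantee of Theorem \ref{thm:MainThm} at the specific index $r$ featuring in the hypothesis produces $\Phi(T) \leq (r/\delta^{2})\,\Phi^{*}$, which is exactly the claimed $O(r/\delta^{2})$ approximation ratio; note that we are free to substitute this particular $r$ because Theorem \ref{thm:MainThm} provides a minimum over all $r \in [n]$. I do not foresee any real obstacle in carrying this out, since the corollary is a purely arithmetic consequence of the main theorem and the $O(\cdot)$ notation hides only the absolute constant $1$. The entire content of the corollary lies in Theorem \ref{thm:MainThm} itself, whose proof is deferred to Section \ref{Sec:SPCut}.
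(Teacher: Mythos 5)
Your proposal is correct and matches the paper's intent exactly: the corollary is stated as an ``immediate'' consequence of Theorem \ref{thm:MainThm}, and the paper's implicit derivation is precisely the arithmetic you give, namely $\lambda_{r+1} \geq \Phi^{*}/(1-\delta)$ implies $1 - \Phi^{*}/\lambda_{r+1} \geq \delta$, hence $\bigl(1 - \Phi^{*}/\lambda_{r+1}\bigr)^{-2} \leq 1/\delta^{2}$. Your handling of the degenerate cases ($\Phi^{*}=0$, positivity of $\lambda_{r+1}$) is a harmless extra care the paper omits.
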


The proof of Theorem \ref{thm:MainThm} is based on the following property (see Subsection \ref{subsec:triangleineq}) of vectors in $\ell_2^2$ space that could be of independent interest.
\begin{prop} \label{prop:l2prop}
If $x_{1}, x_{2}, \dotsc, x_{n}$ satisfy $\ell_{2}^{2}$ triangle inequalities, then
\[
\inner{x_{i} - x_{j}}{\frac{x_{k} - x_{l}}{\norm{x_{k} - x_{l}}}}^{2} \leq \abs{\inner{x_{i} - x_{j}}{x_{k} - x_{l}}} \leq \norm{x_{i} - x_{j}}^{2}, \quad \text{for all $i, j, k, l \in [n]$}.
\]
\end{prop}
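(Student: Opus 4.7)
My first observation would be that the two stated inequalities are essentially the same bound written two different ways. Multiplying the left-most quantity through by $\norm{x_k - x_l}^2$, the first inequality reduces to $\abs{\inner{x_i-x_j}{x_k-x_l}} \leq \norm{x_k-x_l}^{2}$, which is just the second inequality with the roles of the pairs $(i,j)$ and $(k,l)$ interchanged. So the entire proposition reduces to proving the single bound
\[
\abs{\inner{x_{i} - x_{j}}{x_{k} - x_{l}}} \leq \norm{x_{i} - x_{j}}^{2} \qquad \text{for all } i,j,k,l,
\]
since then swapping $(i,j)\leftrightarrow(k,l)$ yields the symmetric bound with $\norm{x_k-x_l}^2$.

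For the bound itself, the natural tool is the polarization identity
\[
2 \inner{x_{i} - x_{j}}{x_{k} - x_{l}} = \norm{x_{i} - x_{l}}^{2} + \norm{x_{j} - x_{k}}^{2} - \norm{x_{i} - x_{k}}^{2} - \norm{x_{j} - x_{l}}^{2},
\]
which converts the inner product into a signed combination of squared distances—precisely the quantities the $\ell_{2}^{2}$ triangle inequalities control. Applying the $\ell_{2}^{2}$ triangle inequality to the two positive terms using $x_j$ and $x_i$ respectively as the pivot,
\[
\norm{x_{i} - x_{l}}^{2} \leq \norm{x_{i} - x_{j}}^{2} + \norm{x_{j} - x_{l}}^{2}, \qquad \norm{x_{j} - x_{k}}^{2} \leq \norm{x_{j} - x_{i}}^{2} + \norm{x_{i} - x_{k}}^{2},
\]
and summing cancels the two negative terms on the right-hand side of the polarization identity, leaving $2 \inner{x_{i} - x_{j}}{x_{k} - x_{l}} \leq 2 \norm{x_{i} - x_{j}}^{2}$. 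The reverse bound $-\inner{x_{i}-x_{j}}{x_k-x_l} \leq \norm{x_i-x_j}^2$ follows by swapping $k$ and $l$ and repeating the same argument, giving the desired inequality with the absolute value.

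I don't anticipate a real obstacle here: the only non-obvious step is recognizing that the polarization identity exactly lines up the four squared distances so that two applications of $\ell_{2}^{2}$ triangle inequality (pivoting through $x_i$ and $x_j$, i.e., the endpoints of the edge whose squared length we want as the upper bound) remove the cross terms cleanly. The case $\norm{x_k - x_l} = 0$ in the first inequality is handled by interpreting the left-hand side as $0$, which is consistent since then $\inner{x_i-x_j}{x_k-x_l}=0$ by the bound just proved.
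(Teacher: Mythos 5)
Your proof is correct and is essentially the same argument as the paper's: the two $\ell_{2}^{2}$ triangle inequalities you invoke (pivoting at $x_{j}$ and $x_{i}$) are exactly the paper's two acuteness conditions $\inner{x_{l} - x_{j}}{x_{i} - x_{j}} \geq 0$ and $\inner{x_{i} - x_{k}}{x_{i} - x_{j}} \geq 0$ written in squared-distance form, and your polarization step reproduces the paper's summation of them. Your opening observation that the first inequality is just the second with $(i,j)$ and $(k,l)$ swapped likewise matches how the paper derives it.
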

Geometrically, this gives an embedding of $x_{1}, x_{2}, \dotsc, x_{n}$ from $\ell_{2}^{2}$ into $\ell_{1}$ via appropriately scaled projections onto the line segment joining $x_{k}$ and $x_{l}$, for any $k \neq l$. Proposition \ref{prop:l2prop} says that this embedding is a contraction and the distortion for a pair is lower bounded by their squared distance after this projection. Thus, we can relate the average distortion to projections along certain directions.

\section{Previous and Related Work}
The \sparsestcut problem has seen a lot of activity, given its central importance. For the case of \uniformSparsestcut, where the demand graph is the complete graph with unit demands on all pairs, the first non-trivial bound was by using Cheeger's inequality (and a corresponding algorithm)\cite{AlonMilman85}. This gives an approximation factor of $1/\sqrt{\lambda_2(L)}$, where $\lambda_2(L)$ is the second-smallest eigenvalue of the normalized graph Laplacian matrix.

In in a seminal work, Leighton and Rao \cite{LeightonR99} related the problem of approximating the sparsest cut to embeddings between metric spaces, in particular, into $\ell_1$. By solving a LP relaxation of the \sparsestcut problem, they produce a metric on points and proceed to embed it into $\ell_1$, and show that the worst case distortion in doing so determines the approximation factor. Using a theorem of Bourgain, they obtain an $O(\log n)$ approximation.

Following this, the breakthrough work of Arora, Rao and Vazirani  \cite{AroraRV09} used a SDP (which we will refer to as ARV SDP) that could be viewed as a strengthening of both the spectral approach via Cheeger's inequality, and the distance metric approach of Leighton and Rao, to produce an $O(\sqrt{\log n})$ approximation for the \uniformSparsestcut. This SDP used the \emph{triangle inequality} constraints on the squared distances between vectors crucially, and was equivalent to the problem of embedding metrics from $\ell_2 ^2$ into $\ell_1$ with low \emph{average} distortion. Further work by Arora, Lee, and Naor  \cite{AroraLN05} extended these techniques to give an $O\left(\sqrt{\log n \log\log n}\right)$ approximation for the general \sparsestcut (equivalently, for the \emph{worst} case distortion of $\ell_2^2$ metrics into $\ell_1$).

Recently, Guruswami and Sinop \cite{GuruswamiS13-general} gave a generic method for rounding a class of SDP hierarchies proposed by Lasserre \cite{Lasserre01, Laurent03}, and applied it to the \sparsestcut problem \cite{GuruswamiS13}. This hierarchy subsumes the ARV SDP within $3$-levels, but the size of their SDP with $r$ levels increases as $n^{O(r)}$. The approximation guarantee depends on the \emph{generalized eigenvalues} of the pair of Laplacians of the cost and demand graphs, and is as follows:

\begin{theorem}[Guruswami-Sinop \cite{GuruswamiS13}]
Given $C,D$ as cost and demand graphs let $0\leq \lambda_1 \leq \lambda_2 \ldots \leq \lambda_n$ be the generalized eigenvalues between $C,D$. Then for every $r \in [n]$ and $\epsilon \geq 0$, a solution satisfying $O(r/\epsilon)$ levels of the Lasserre hierarchy with objective value $\Phi^*$ can be rounded to produce a cut $T$ with value
\[
\Phi(T) \leq \Phi^* \left(1 - \frac{(1+\eps) \Phi^*}{\lambda_{r+1}}\right)^{-1}, \quad   \text{if $\lambda_{r+1} \geq (1+\eps) \Phi^{*}$}.
\]
\end{theorem}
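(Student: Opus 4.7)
The plan is to use the $O(r/\eps)$-level Lasserre solution to obtain, for every seed set $S$ of size up to $r/\eps$, a pseudo-distribution $\mu_S$ on $\{0,1\}^S$ together with conditional scalar marginals $x_{i}^{S,\alpha} := \tilde{\mathbb{P}}[i \in T \mid S = \alpha]$ and conditional pseudo-inner-products that are all Lasserre-consistent. The rounding picks a single seed $S$, draws $\alpha \sim \mu_S$, picks a threshold $\theta \in [0,1]$ uniformly, and outputs $T = \{ i : x_{i}^{S,\alpha} \ge \theta \}$. The analysis must show that the expected cost of $T$ is at most $\Phi^{*}$ times the SDP demand, while the expected demand of $T$ is at least a $(1 - (1+\eps)\Phi^{*}/\lambda_{r+1})$ fraction of the SDP demand; averaging over $\theta$ then yields a single realized $T$ with the stated sparsity.

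The technical crux is the choice of $S$. I would pick $S$ greedily so as to drive down the residual
\[
\Delta(S) := \sum_{i<j} d_{ij} \left( \norm{v_{i} - v_{j}}^{2} - \mathbb{E}_{\alpha \sim \mu_S} (x_{i}^{S,\alpha} - x_{j}^{S,\alpha})^{2} \right),
\]
which measures how much of the demand-weighted squared distance fails to survive conditioning. The key spectral lemma to prove is that $O(r/\eps)$ greedy steps suffice to push $\Delta(S)$ below $((1+\eps)\Phi^{*}/\lambda_{r+1}) \cdot \sum d_{ij} \norm{v_{i} - v_{j}}^{2}$. I would prove this by simultaneously diagonalizing the cost and demand Laplacians: since the SDP objective equals $\Phi^{*} \cdot \sum d_{ij} \norm{v_i - v_j}^{2}$, expansion in the generalized eigenbasis of $(\mathcal{L}_C, \mathcal{L}_D)$ shows that any significant residual mass must be concentrated on the at most $r$ directions with eigenvalue below $\lambda_{r+1}$. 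Each greedy conditioning step then removes a constant fraction of the mass in one such direction via a column-selection / pivoting argument on the Lasserre moment matrix, and $O(r/\eps)$ rounds bring the residual below the target. It is here that the $O(r/\eps)$ Lasserre levels are genuinely needed, since the pseudo-distributions must remain valid and PSD after conditioning on every seed of this size.

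Given such an $S$, the rounding analysis is routine. Since each $x_{i}^{S,\alpha} \in [0,1]$, threshold rounding with uniform $\theta$ contributes $c_{ij} \abs{x_{i}^{S,\alpha} - x_{j}^{S,\alpha}}$ to the expected cut across each pair; using $\abs{a-b} \le a(1-b) + b(1-a)$ for $a,b \in [0,1]$ and averaging over $\alpha \sim \mu_S$, Lasserre consistency converts this to $c_{ij} \norm{v_i - v_j}^{2}$, so the total expected cost is at most the SDP cost $\Phi^{*} \sum d_{ij} \norm{v_i-v_j}^{2}$. For the demand side, $\mathbb{E}_\alpha \abs{x_{i}^{S,\alpha} - x_{j}^{S,\alpha}} \ge \mathbb{E}_\alpha (x_{i}^{S,\alpha} - x_{j}^{S,\alpha})^{2}$ (both values lie in $[0,1]$), so the expected demand of $T$ is at least $\sum d_{ij} \norm{v_i - v_j}^{2} - \Delta(S)$, which is at least $\bigl(1 - (1+\eps)\Phi^{*}/\lambda_{r+1}\bigr) \sum d_{ij} \norm{v_i - v_j}^{2}$ by the choice of $S$. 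Dividing the two bounds yields the stated approximation ratio. The main obstacle is proving the spectral greedy lemma — it is the quantitative heart of the argument and the only place where the slackness $(1+\eps)$ in the hypothesis $\lambda_{r+1} \ge (1+\eps)\Phi^{*}$ is actually cashed in.
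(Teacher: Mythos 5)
Your overall architecture --- choose a seed set $S$, use the Lasserre solution to map each vertex to a scalar in $[0,1]$, threshold-round, and compare the ratio of expected cost to expected demand --- is the same skeleton as Guruswami--Sinop's (this theorem is only cited in the present paper, which sketches their method in Section 4 and then proves a weaker $O(r/\delta^2)$ bound by a Lasserre-free route). Your cost-side and demand-side inequalities are also the right ones: separation probability under a uniform threshold equals $\abs{x_i^{S,\alpha}-x_j^{S,\alpha}}$, which Lasserre consistency bounds above by $\norm{v_i-v_j}^2$ in expectation over $\alpha$ and below by the conditional variance.

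The genuine gap is in your ``key spectral lemma,'' in two places. First, the mechanism you propose --- each greedy conditioning step removes a constant fraction of the residual mass in one of the $r$ low-eigenvalue directions --- yields an additive, geometrically decaying error of the form $(\text{rank-}r\text{ tail}) + 2^{-\Omega(|S|/r)}(\text{total mass})$, and to beat the theorem's target you would need that second term below $\eps$ times the tail, which can force far more than $O(r/\eps)$ steps when the tail is small. What Guruswami--Sinop actually invoke is the column subset selection theorem: for any matrix there exist $O(r/\eps)$ columns whose span captures all but a $(1+\eps)\sum_{t>r}\sigma_t$ fraction of the squared Frobenius norm. This is a nontrivial standalone result, proved via volume sampling and determinantal identities relating expected projection residuals to ratios of elementary symmetric polynomials of the $\sigma_t$; a constant-fraction-per-step greedy argument does not deliver the multiplicative $(1+\eps)$ guarantee. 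Second, you diagonalize $(\mathcal{L}_C,\mathcal{L}_D)$ directly, but the spectrum controlling the conditioning residual is that of the Gram matrix of $\{\sqrt{d_{ij}}(v_i-v_j)\}_{i<j}$; the bridge between the two is a separate step (Proposition \ref{prop:laplacian} here), which uses the Courant--Fischer characterization of $\lambda_{r+1}$ and the fact that the SDP objective equals $\Phi^*$ to show $\sum_{t>r}\sigma_t/\sum_t\sigma_t \le \Phi^*/\lambda_{r+1}$. Without both ingredients the claimed bound $\Delta(S)\le ((1+\eps)\Phi^*/\lambda_{r+1})\sum_{i<j} d_{ij}\norm{v_i-v_j}^2$ does not follow. (Minor: do not literally ``divide the two bounds''; you need the standard observation that some outcome achieves cost/demand at most the ratio of the expectations.)
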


For the specific case of the \uniformSparsestcut problem, Arora, Ge and Sinop \cite{AroraGS13} show, by using techniques from  Guruswami-Sinop, that under certain conditions on the input graph (expansions of sets of size $\leq n/r$), they can get a $(1+\epsilon)$ approximation; again using the $r$-th level of the Lasserre hierarchy.

On the side of integrality gaps, the best known integrality gap for the (non-uniform) ARV SDP is $(\log n)^{\Omega(1)}$ by Cheeger, Kleiner and Naor \cite{CheegerKN09}.

The main motivation behind this work is to get approximation guarantees similar to the Guruswami-Sinop rounding \cite{GuruswamiS13}, but without using higher levels of the Lasserre hierarchy. Some parts of the Guruswami-Sinop proof such as column subset selection via volume sampling do not require higher level Lasserre vectors or constraints. Also the final approximation guarantee of Guruswami-Sinop does not depend on higher level Lasserre vectors. While our approximation guarantee is mildly worse than theirs, our algorithm always runs in polynomial time and does not use higher level Lasserre vectors in the rounding.

\section{Notation and Preliminaries} \label{sec:prelim}

We state the necessary notation and definitions formally in this section.

\paragraph*{Sets, Matrices, Vectors}
We use $[n]={1,\ldots,n}$. For a matrix $X \in \mathbb{R}^{d\times d}$, we say $X\succeq 0$ or $X$ is positive-semidefinite if $y^TXy \geq 0$ for all $y\in \mathbb{R}^d$. The Gram-matrix of a matrix $M \in \mathbb{R}^{d_1 \times d_2}$ is the matrix $M^T M$, which is positive-semidefinite. We will often need the eigenvalues of the Gram-matrix of $M$. We will denote these by $\sigma_1(M)\geq \sigma_2(M) \geq \ldots \sigma_{d_2}(M)\geq 0$, arranged in descending order. The \emph{Frobenius} norm of $M$ is given by $\fnorm{M} \triangleq \sqrt{\sum_i \sigma_i(M)} = \sqrt{\sum_{i\in[d_1],j\in[d_2]}M(i,j)^2}$. In our analysis, we will sometimes view a matrix $M$ as a collection of its columns viewed as vectors; $M=(m_j)_{j\in[d_2]}$. In this case, $\fnorm{M}^2=\sum_j \norm{m_j}^2$.

\paragraph*{Generalized Eigenvalues}
Given two symmetric matrices $X, Y\in \mathbb{R}^d\times d$ with $Y\succeq 0$, and for
$i \leq \text{rank}(Y)$, we define their $i$-th smallest generalized eigenvalue as the following:
\[
 \lambda_i = \max_{\text{rank}(Z)\leq i-1 \quad} \min_{w \bot Z; w\neq 0 \quad} \frac{w^TXw}{w^TYw}
\]

\paragraph*{Graphs and Laplacians}
All graphs will be defined on a vertex set $V$ of size $n$. The vertices will usually be referred to by indices $i,j,k,l \in[n]$. Given a graph with weights on pairs $W:{V \choose 2}\mapsto \mathbb{R}^+$, the graph Laplacian matrix is defined as:
\begin{align*}
 L_W(i,j) =
 \begin{cases}
 -W(i,j) &\quad \text{if $i\neq j$} \\
 \sum_k W(i,k) &\quad \text{if $i=j$} \\
 \end{cases}
\end{align*}

\paragraph*{\sparsestcut SDP}\label{sdp:ARVSDP}
The SDP we use for \sparsestcut on the vertex set $V$ with costs and demands $c_{ij}, d_{kl} \geq 0$  and corresponding cost and demand graphs $C:{V \choose 2}\mapsto \mathbb{R}^+$ and $D:{V\choose 2}\mapsto \mathbb{R}^+$, is effectively the following:

\begin{align}
\textbf{SDP:}\quad \Phi(SDP) & = \min \frac{\sum_{i<j} c_{ij} \norm{x_{i}-x_{j}}^2}{\sum_{k<l} d_{kl} \norm{x_{k}-x_{l}}^2} \\
&\text{subject to}\quad  \norm{x_{i}-x_{j}}^2 +\norm{x_{j}-x_{k}}^2\geq \norm{x_{i}-x_{k}}^{2} & \forall i,j,k\in[n]
\end{align}

While this is technically not an SDP due to the presence of a fraction in the objective, it is not difficult to see that we can construct an equivalent SDP as shown in \cite{GuruswamiS13}.
We will use $\Phi(ALG)$ to denote the sparsity of the cut produced by an algorithm, and will compare it to $ \Phi(SDP)$. Note that any set of vectors ${x_1, \ldots, x_n}$ that are feasible for this SDP satisfy the triangle inequalities on the \emph{squares} of their distances, and are said to satisfy the $\ell_2^2$ triangle inequality, or are in $ \ell_2 ^2 $ space.

\paragraph*{Lasserre Hierarchy}
The Lasserre hierarchy \cite{Lasserre01} at level $r$ strengthens the basic SDP relaxation by introducing new vectors, $x_S(f)$, for every $S\subseteq[n]$ with $|S|\leq r$ and every $f:S\rightarrow \{0,1\}^{|S|}$, and requiring certain consistency conditions on the inner products between them. We do not go into the details of the hierarchy here, since we will not be using it in this work. We refer the reader to available surveys, e.g. \cite{Laurent03} for more details. For the \sparsestcut problem, one can show that the $\ell_{2}^{2}$ triangle inequalities are subsumed by $3$ levels of this hierarchy.

\paragraph*{$\ell_1$ embeddings and cuts}
Leighton and Rao \cite{LeightonR99} show that instead of producing cuts, it is sufficient to produce a mapping $Z: V\rightarrow \mathbb{R}^d$, with $z_i= Z(i)$, from which we can extract a cut $T$ such that
\[ \Phi(T) \leq \frac{\sum_{i<j}c_{ij}\Vert z_i - z_j \Vert_1 }{\sum_{k<l}d_{kl}\Vert z_k - z_l \Vert_1 } \].

This follows from the fact that  $\ell_1$ metrics are exactly the cone of cut-metrics.

\section{Lasserre hierarchy vs. $\ell_{2}^{2}$ triangle inequality}
Let's first recap Guruswami-Sinop \cite{GuruswamiS13,GuruswamiS13-general,GuruswamiS12a,GuruswamiS12b} to demonstrate its key ideas and to facilitate its comparison with our method coming later. At the basic level, they map SDP solution vectors to values in $[0, 1]$, where one can then run independent or threshold rounding. To define this map, they need $O(r)$-level Lasserre vectors $\{x_{S}(f)\}_{S,f}$ for subsets $S \subseteq [n]$ of size at most $O(r)$ and assignments $f \in \{0, 1\}^{\size{S}}$. For simplicity of notation, call $x_{\{i\}}(1)$ as $x_{i}$. Now the algorithm has two parts.
\begin{enumerate}
\item Pick a subset $S$ of size $O(r)$ using volume sampling \cite{GuruswamiS11} on the matrix with columns as $\{\sqrt{d_{ij}}(x_{i} - x_{j})\}_{i<j}$. This part does not require Lasserre vectors or constraints in the algorithm as well as the analysis.
\item For the $S$ fixed as above, pick $x_{S}(f)$ with probability $\propto \norm{x_{S}(f)}^{2}$ and map each $x_{i}$ to $p_{i}^{(f)} \in [0, 1]$ as follows.
\[
x_{i} \mapsto p_{i}^{(f)} = \frac{\inner{x_{i}}{x_{S}(f)}}{\norm{x_{S}(f)}^{2}} \in [0, 1].
\]
Once we have $p_{i}^{(f)} \in [0, 1]$ for all $i \in [n]$, we can either do threshold rounding with a random threshold $r \in [0, 1]$ or do independent rounding with $p_{i}^{(f)}$'s as probabilities. Lasserre constraints are used to show $p_{i}^{(f)} \in [0, 1]$ and the following important property used in the analysis.
\[
\inner{x_{i} - x_{j}}{\frac{x_{S}(f)}{\norm{x_{S}(f)}}}^{2} \leq \abs{\inner{x_{i} - x_{j}}{x_{S}(f)}} \leq \norm{x_{i} - x_{j}}^{2}, \quad \text{for all $i, j \in [n]$}.
\]
\end{enumerate}
What is special about these directions $x_{S}(f)$? Are there other directions that exhibit similar property and can be found without solving multiple levels of Lasserre hierarchy?

\subsection{$\ell_{2}^{2}$ triangle inequality} \label{subsec:triangleineq}
We make an interesting observation that $\ell_{2}^{2}$ triangle inequalities give a large collection of vectors that exhibit the same property as the $x_{S}(f)$'s used in the analysis of Guruswami-Sinop. $\ell_{2}^{2}$ triangle inequalities for all triplets, or equivalently, the acuteness of all angles in a point set $\{x_{1}, x_{2}, \dotsc, x_{n}\}$ can be written as $\inner{x_{i} - x_{l}}{x_{k} - x_{l}} \geq 0$, for all $i, k, l \in [n]$, and gives the following interesting mapping of vectors $x_{i}$ to values $p_{i}^{(k, l)} \in [0, 1]$ as
\[
x_{i} \mapsto p_{i}^{(k, l)} = \frac{\inner{x_{i} - x_{l}}{x_{k} - x_{l}}}{\norm{x_{k} - x_{l}}^{2}}.
\]
Note that $p_{i}^{(k, l)}$ depends on the ordered pair $(k, l)$, and $p_{i}^{(k, l)} \in [0, 1]$ by the $\ell_{2}^{2}$ triangle inequalities or acuteness of all angles. Another interesting consequence is
\[
1 - p_{i}^{(k, l)} = \frac{\inner{x_{k} - x_{i}}{x_{k} - x_{l}}}{\norm{x_{k} - x_{l}}^{2}}.
\]
Moreover, we show that the direction $x_{k} - x_{l}$ behaves similar to $x_{S}(f)$ used in the analysis of Guruswami-Sinop.
\begin{prop} \label{prop:triangle} [Restatement of Proposition \ref{prop:l2prop}]
If $x_{1}, x_{2}, \dotsc, x_{n}$ satisfy $\ell_{2}^{2}$ triangle inequalities, then
\[
\inner{x_{i} - x_{j}}{\frac{x_{k} - x_{l}}{\norm{x_{k} - x_{l}}}}^{2} \leq \abs{\inner{x_{i} - x_{j}}{x_{k} - x_{l}}} \leq \norm{x_{i} - x_{j}}^{2}, \quad \text{for all $i, j, k, l \in [n]$}.
\]
\end{prop}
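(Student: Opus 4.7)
The plan is to reduce both halves of the displayed inequality to a single statement of the form $|\langle x_i - x_j, x_k - x_l\rangle| \leq \|x_i - x_j\|^2$. The left inequality is $\langle u,v\rangle^2/\|v\|^2 \leq |\langle u, v\rangle|$, where $u = x_i-x_j$ and $v = x_k-x_l$, which is equivalent (trivially if $\langle u,v\rangle = 0$, else divide by $|\langle u,v\rangle|$) to $|\langle u, v\rangle| \leq \|v\|^2$. The right inequality is $|\langle u, v\rangle| \leq \|u\|^2$. These two are the same statement up to swapping the pair $(i,j)$ with the pair $(k,l)$, so proving one of them suffices.

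To prove $|\langle x_i - x_j, x_k - x_l\rangle| \leq \|x_i-x_j\|^2$, I would use the standard polarization-type identity
\[
2\,\langle x_i - x_j, x_k - x_l\rangle = \|x_i - x_l\|^2 + \|x_j - x_k\|^2 - \|x_i - x_k\|^2 - \|x_j - x_l\|^2,
\]
which is a direct computation by expanding the inner products in terms of $\|x_a\|^2$ and $\langle x_a, x_b\rangle$. Without loss of generality assume $\langle x_i - x_j, x_k - x_l\rangle \geq 0$, since otherwise we may swap the labels $k$ and $l$ (this flips the sign but preserves the bound we want). Then the claim reduces to
\[
\|x_i - x_l\|^2 + \|x_j - x_k\|^2 \leq 2\,\|x_i - x_j\|^2 + \|x_i - x_k\|^2 + \|x_j - x_l\|^2.
\]

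The final step is a pair of applications of the $\ell_2^2$ triangle inequality, one on the triple $(i,j,l)$ giving $\|x_i - x_l\|^2 \leq \|x_i - x_j\|^2 + \|x_j - x_l\|^2$, and the other on the triple $(j,i,k)$ giving $\|x_j - x_k\|^2 \leq \|x_i - x_j\|^2 + \|x_i - x_k\|^2$. Adding these two inequalities produces exactly the desired bound. I do not expect any genuine obstacle here; the only subtlety is the sign reduction by swapping $k$ and $l$, and noticing that the two halves of the proposition are equivalent under swapping $(i,j)$ with $(k,l)$, which collapses the work into a single triangle-inequality computation.
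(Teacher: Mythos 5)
Your proof is correct and is essentially the paper's proof in a different notation: the two $\ell_2^2$ triangle inequalities you add, on the triples $(i,j,l)$ and $(j,i,k)$, are exactly the two acuteness conditions $\inner{x_{l} - x_{j}}{x_{i} - x_{j}} \geq 0$ and $\inner{x_{i} - x_{k}}{x_{i} - x_{j}} \geq 0$ that the paper sums directly, with your polarization identity merely translating between the squared-distance and inner-product forms. Your sign reduction by swapping $k$ and $l$, and your observation that the left inequality collapses to $\abs{\inner{x_{i} - x_{j}}{x_{k} - x_{l}}} \leq \norm{x_{k} - x_{l}}^{2}$ via swapping $(i,j)$ with $(k,l)$, likewise match the paper's argument exactly.
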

\begin{proof}
By acuteness of all angles, we know that
\[
\inner{x_{i} - x_{k}}{x_{i} - x_{j}} \geq 0 \quad \text{and} \quad \inner{x_{l} - x_{j}}{x_{i} - x_{j}} \geq 0, \quad \text{for all $i, j, k, l \in [n]$}.
\]
Adding both the inequalities we get $\norm{x_{i} - x_{j}}^{2} - \inner{x_{k} - x_{l}}{x_{i} - x_{j}} \geq 0$, or equivalently $\inner{x_{k} - x_{l}}{x_{i} - x_{j}} \leq \norm{x_{i} - x_{j}}^{2}$. Since swapping $k$ and $l$ does not affect the above argument, we get the upper bound
\[
\abs{\inner{x_{k} - x_{l}}{x_{i} - x_{j}}} \leq \norm{x_{i} - x_{j}}^{2}, \quad \text{for all $i, j, k, l \in [n]$}.
\]
Swapping $(i, j)$ and $(k, l)$, we also have $\abs{\inner{x_{k} - x_{l}}{x_{i} - x_{j}}} \leq \norm{x_{k} - x_{l}}^{2}$. Therefore,
\begin{align*}
\inner{x_{i} - x_{j}}{\frac{x_{k} - x_{l}}{\norm{x_{k} - x_{l}}}}^{2} & = \frac{\inner{x_{k} - x_{l}}{x_{i} - x_{j}}^{2}}{\norm{x_{k} - x_{l}}^{2}} \\
& \leq  \frac{\inner{x_{k} - x_{l}}{x_{i} - x_{j}}^{2}}{\abs{\inner{x_{k} - x_{l}}{x_{i} - x_{j}}}} \\
& = \abs{\inner{x_{k} - x_{l}}{x_{i} - x_{j}}}.
\end{align*}
\end{proof}

\subsection{Low dimensional SDP solutions}
Although the Guruswami-Sinop \cite{GuruswamiS13} result is finally stated in terms of a condition on generalized eigenvalues, it can also be thought of as a result that gives good approximation guarantees when the SDP solution is close to being low rank. Suppose the Gram matrix of $\{x_{i} - x_{j}\}_{1 \leq i < j \leq n}$ has at least $\delta$ fraction of its spectrum in its top $r$ eigenvalues, that is, $\sum_{t=1}^{r} \lambda_{t} \geq \delta~ \sum_{t=1}^{n} \lambda_{t}$, where $\lambda_{1} \geq \lambda_{2} \geq \dotsc \geq \lambda_{n} \geq 0$ are the eigenvalues of the Gram matrix of $\{x_{i} - x_{j}\}_{1 \leq i < j \leq n}$. Then Proposition \ref{prop:spectral} proves the existence of a good direction $x_{k} - x_{l}$ by weighted averaging.

\begin{prop} \label{prop:spectral}
If $x_{1}, x_{2}, \dotsc, x_{n}$ satisfy the above spectral or low-rank property, then there exists $x_{k} - x_{l}$ such that
\[
\sum_{i<j} \inner{x_{i} - x_{j}}{\frac{x_{k} - x_{l}}{\norm{x_{k} - x_{l}}}}^{2} \geq \frac{\delta^{2}}{r} \sum_{i<j} \norm{x_{i} - x_{j}}^{2}.
\]
\end{prop}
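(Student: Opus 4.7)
The plan is to locate the required direction by a weighted averaging argument over all pairs $(k,l)$, and then to convert the top-$r$ spectral mass hypothesis into a lower bound on $\sum_{t}\lambda_{t}^{2}$ via Cauchy--Schwarz. Let $M$ be the matrix whose columns are $\{x_{i}-x_{j}\}_{i<j}$, so that $MM^{T}=\sum_{i<j}(x_{i}-x_{j})(x_{i}-x_{j})^{T}$ shares its nonzero eigenvalues with the Gram matrix $M^{T}M$ appearing in the statement. Thus $\lambda_{1}\geq\dotsc\geq\lambda_{n}\geq 0$ are its eigenvalues and $\sum_{t}\lambda_{t}=\fnorm{M}^{2}=\sum_{i<j}\norm{x_{i}-x_{j}}^{2}$. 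Moreover, for any pair $(k,l)$ with $x_{k}\neq x_{l}$, the quantity we wish to lower bound is the Rayleigh-type quotient
\[
\sum_{i<j}\inner{x_{i}-x_{j}}{\frac{x_{k}-x_{l}}{\norm{x_{k}-x_{l}}}}^{2} \;=\; \frac{(x_{k}-x_{l})^{T}MM^{T}(x_{k}-x_{l})}{\norm{x_{k}-x_{l}}^{2}}.
\]

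The crucial move is to average this quotient over all pairs $(k,l)$ with the weight $\norm{x_{k}-x_{l}}^{2}$, precisely so that the denominator cancels. The weighted numerator becomes
\[
\sum_{k<l}(x_{k}-x_{l})^{T}MM^{T}(x_{k}-x_{l}) \;=\; \tr{MM^{T}MM^{T}} \;=\; \sum_{t=1}^{n}\lambda_{t}^{2},
\]
while the total weight is $\sum_{k<l}\norm{x_{k}-x_{l}}^{2}=\sum_{t}\lambda_{t}$. Hence some pair $(k,l)$ must attain at least the weighted average, producing a direction with
\[
\sum_{i<j}\inner{x_{i}-x_{j}}{\frac{x_{k}-x_{l}}{\norm{x_{k}-x_{l}}}}^{2} \;\geq\; \frac{\sum_{t}\lambda_{t}^{2}}{\sum_{t}\lambda_{t}}.
\]

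It only remains to show that the spectral hypothesis implies $\sum_{t}\lambda_{t}^{2}\geq(\delta^{2}/r)\bigl(\sum_{t}\lambda_{t}\bigr)^{2}$. By Cauchy--Schwarz applied to the top $r$ eigenvalues,
\[
\Bigl(\sum_{t=1}^{r}\lambda_{t}\Bigr)^{2} \;\leq\; r\sum_{t=1}^{r}\lambda_{t}^{2} \;\leq\; r\sum_{t=1}^{n}\lambda_{t}^{2},
\]
and substituting the assumption $\sum_{t=1}^{r}\lambda_{t}\geq\delta\sum_{t}\lambda_{t}$ yields exactly the required bound. Plugging this into the averaging inequality gives the target $(\delta^{2}/r)\sum_{i<j}\norm{x_{i}-x_{j}}^{2}$. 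The only non-obvious ingredient is the choice of the weights $\norm{x_{k}-x_{l}}^{2}$ in the averaging step, which is tuned so that the averaged numerator becomes $\tr{(MM^{T})^{2}}=\sum_{t}\lambda_{t}^{2}$ — the exact power of the spectrum on which Cauchy--Schwarz can convert an $\ell_{1}$-type mass assumption into an $\ell_{2}$-type bound. Notably, this proposition needs none of the $\ell_{2}^{2}$ triangle inequalities; those enter the overall argument separately through Proposition~\ref{prop:triangle}.
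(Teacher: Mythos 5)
Your proof is correct and follows essentially the same route as the paper: a squared-length-weighted average over pairs $(k,l)$ that identifies the average with $\sum_{t}\lambda_{t}^{2}/\sum_{t}\lambda_{t}$, followed by Cauchy--Schwarz on the top $r$ eigenvalues and the spectral hypothesis. Your explicit trace formulation $\tr{(MM^{T})^{2}}$ is just a more matrix-flavored rendering of the paper's identical computation.
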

\begin{proof}
To show the existence of a good $x_{k} - x_{l}$, we take expectation over $x_{k} - x_{l}$ by squared length sampling.
\begin{align*}
\max_{k<l}~ \sum_{i<j} \inner{x_{i} - x_{j}}{\frac{x_{k} - x_{l}}{\norm{x_{k} - x_{l}}}}^{2} & \geq \sum_{k<l} \frac{\norm{x_{k} - x_{l}}^{2}}{\sum_{p<q} \norm{x_{p} - x_{q}}^{2}} \sum_{i<j} \inner{x_{i} - x_{j}}{\frac{x_{k} - x_{l}}{\norm{x_{k} - x_{l}}}}^{2} \\
& = \frac{\sum_{k<l} \sum_{i<j} \inner{x_{i} - x_{j}}{x_{k} - x_{l}}^{2}}{\sum_{p<q} \norm{x_{p} - x_{q}}^{2}} \\
& = \frac{\sum_{t=1}^{n} \lambda_{t}^{2}}{\sum_{t=1}^{n} \lambda_{t}} \\
& \geq \frac{\sum_{t=1}^{r} \lambda_{t}^{2}}{\sum_{t=1}^{n} \lambda_{t}} \\
& \geq \frac{\left(\sum_{t=1}^{r} \lambda_{t}\right)^{2}}{r~ \sum_{t=1}^{n} \lambda_{t}} \qquad \text{by Cauchy-Schwarz inequality} \\
& \geq \frac{\delta^{2}}{r}~ \sum_{t=1}^{n} \lambda_{t} \qquad \text{by the spectral or low-rank property} \\
& = \frac{\delta^{2}}{r}~ \sum_{i<j} \norm{x_{i} - x_{j}}^{2}.
\end{align*}
\end{proof}

\section{Non-uniform sparsest cut}\label{Sec:SPCut}

We now give the proof of the Main Theorem (Theorem \ref{thm:MainThm}). The rounding algorithm is Algorithm \ref{sp-cut-alg}.

\begin{algorithm}
\caption{ Algorithm for \sparsestcut} \label{sp-cut-alg}
\begin{algorithmic}[1]
\REQUIRE $C,D$ and a solution $\{x_1, \ldots, x_n\}$ to the ARV SDP for \sparsestcut
\ENSURE A cut $(T, \bar{T})$
\FORALL{Pairs $(k,l)\in [n]\times[n]$}
\STATE $p_i^{(k,l)} = \dfrac{\inner{x_{i} - x_{l}}{x_{k} - x_{l}}}{\norm{x_{k} - x_{l}}^{2}}$ \qquad {\tt \% line embedding}
\FORALL{$t\in[n]$}
\STATE $S_{kl}^{(t)} = \left\{i \suchthat p_i^{(k,l)} \leq p_t^{(k,l)}\right\}$ \qquad {\tt \% threshold rounding}
\ENDFOR
\ENDFOR
\STATE $T=\arg\min_{k,l,t} \Phi\left(S_{kl}^{(t)}\right)$
\STATE Output the cut $(T,\bar{T})$
\end{algorithmic}
\end{algorithm}

Algorithm \ref{sp-cut-alg} goes over all directions $x_{k} - x_{l}$. For each of them, it maps $x_{i}$ to $p_{i} \in [0, 1]$ as
\[
x_{i} \mapsto p_{i}^{(k, l)} = \frac{\inner{x_{i} - x_{l}}{x_{k} - x_{l}}}{\norm{x_{k} - x_{l}}^{2}}.
\]
Now for each $t \in [n]$ consider the sweep cut $S_{t} = \{j \suchthat p_{j}^{(k, l)} \leq p_{t}^{(k, l)}\}$, and output the best amongst them as $T$.

For convenience of notation, we will do the analysis using the corresponding $\ell_{1}$ embedding, as mentioned in Section \ref{sec:prelim}. Given an $\ell_{1}$-embedding, we can get a cut with similar guarantee by choosing the best threshold cut along each coordinate, which is what our algorithm does. Define an $\ell_{1}$-embedding of $x_{i}$'s as follows.
\[
x_{i} \mapsto y_{i} = \left(\frac{d_{kl} \norm{x_{k} - x_{l}}^{2} \inner{x_{i} - x_{l}}{x_{k} - x_{l}}}{\sum_{k<l} d_{kl} \norm{x_{k} - x_{l}}^{2}}\right)_{k<l}.
\]
The following is an easy consequence of Proposition \ref{prop:triangle}.
\begin{prop} \label{prop:distortion}
\[
\frac{\sum_{k<l} d_{kl} \inner{x_{i} - x_{j}}{x_{k} - x_{l}}^{2}}{\sum_{k<l} d_{kl} \norm{x_{k} - x_{l}}^{2}} \leq \norm{y_{i} - y_{j}}_{1} \leq \norm{x_{i} - x_{j}}^{2},
\]
\end{prop}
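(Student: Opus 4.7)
The plan is to expand $\|y_i - y_j\|_1$ coordinate-wise using the definition of the embedding and then apply Proposition \ref{prop:triangle} termwise for both the upper and lower bounds. Specifically, by linearity of the inner product in the first argument, the $(k,l)$-coordinate of $y_i - y_j$ equals
\[
\frac{d_{kl} \norm{x_k - x_l}^2 \inner{x_i - x_j}{x_k - x_l}}{\sum_{k<l} d_{kl} \norm{x_k - x_l}^2},
\]
so taking absolute values and summing over $(k,l)$ gives
\[
\norm{y_i - y_j}_1 = \frac{\sum_{k<l} d_{kl} \norm{x_k - x_l}^2 \abs{\inner{x_i - x_j}{x_k - x_l}}}{\sum_{k<l} d_{kl} \norm{x_k - x_l}^2}.
\]

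For the upper bound, I would substitute the right-hand inequality of Proposition \ref{prop:triangle}, namely $\abs{\inner{x_i - x_j}{x_k - x_l}} \leq \norm{x_i - x_j}^2$, into every summand in the numerator; then $\norm{x_i - x_j}^2$ factors out and the remaining ratio equals $1$.

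For the lower bound, I would use the left-hand inequality of Proposition \ref{prop:triangle}, rewritten as
\[
\frac{\inner{x_i - x_j}{x_k - x_l}^2}{\norm{x_k - x_l}^2} \leq \abs{\inner{x_i - x_j}{x_k - x_l}},
\]
and substitute it termwise into the numerator. The factors of $\norm{x_k - x_l}^2$ cancel within each term, leaving exactly $\sum_{k<l} d_{kl} \inner{x_i - x_j}{x_k - x_l}^2$ in the numerator, which is the claimed lower bound.

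There is no real obstacle here: the result is a direct termwise application of Proposition \ref{prop:triangle} combined with the explicit formula for $y_i - y_j$. The only point to be careful about is that the weights $d_{kl} \norm{x_k - x_l}^2 / \sum_{k<l} d_{kl} \norm{x_k - x_l}^2$ form a probability distribution over pairs, so that the upper-bound substitution telescopes cleanly to $\norm{x_i - x_j}^2$, and the lower-bound substitution produces precisely the claimed ratio.
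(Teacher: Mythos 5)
Your proof is correct and follows essentially the same route as the paper: expand $\norm{y_i - y_j}_1$ coordinate-wise into the weighted average of $\abs{\inner{x_i - x_j}{x_k - x_l}}$ and apply the two inequalities of Proposition \ref{prop:triangle} termwise, with the $\norm{x_k - x_l}^2$ factors cancelling in the lower bound. No gaps.
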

\begin{proof}
Let's start with the upper bound.
\begin{align*}
\norm{y_{i} - y_{j}}_{1} & = \frac{\sum_{k<l} d_{kl} \norm{x_{k} - x_{l}}^{2} \abs{\inner{x_{i} - x_{j}}{x_{k} - x_{l}}}}{\sum_{k<l} d_{kl} \norm{x_{k} - x_{l}}^{2}} \\
& \leq \frac{\sum_{k<l} d_{kl} \norm{x_{k} - x_{l}}^{2} \norm{x_{i} - x_{j}}^{2}}{\sum_{k<l} d_{kl} \norm{x_{k} - x_{l}}^{2}} \qquad \text{by Proposition \ref{prop:triangle}} \\
& = \norm{x_{i} - x_{j}}^{2}.
\end{align*}
Now the lower bound.
\begin{align*}
\norm{y_{i} - y_{j}}_{1} & = \frac{\sum_{k<l} d_{kl} \norm{x_{k} - x_{l}}^{2} \abs{\inner{x_{i} - x_{j}}{x_{k} - x_{l}}}}{\sum_{k<l} d_{kl} \norm{x_{k} - x_{l}}^{2}} \\
& \geq \frac{\sum_{k<l} d_{kl} \norm{x_{k} - x_{l}}^{2} \inner{x_{i} - x_{j}}{\frac{x_{k} - x_{l}}{\norm{x_{k} - x_{l}}}}^{2}}{\sum_{k<l} d_{kl} \norm{x_{k} - x_{l}}^{2}} \qquad \text{by Proposition \ref{prop:triangle}} \\
& = \frac{\sum_{k<l} d_{kl} \inner{x_{i} - x_{j}}{x_{k} - x_{l}}^{2}}{\sum_{k<l} d_{kl} \norm{x_{k} - x_{l}}^{2}}.
\end{align*}
\end{proof}
Equipped with this, we can now bound the average distortion, and hence, the approximation factor of our algorithm. We use the following Proposition from Guruswami-Sinop \cite{GuruswamiS13} to rewrite the final bound in terms of the generalized eigenvalues of the Laplacian matrices of the cost and demand graphs.
\begin{prop} \label{prop:laplacian} \cite{GuruswamiS13}
Let $0 \leq \lambda_{1} \leq \dotsc \leq \lambda_{m}$ be the generalized eigenvalues of the Laplacian matrices of the cost and demand graphs. Let $\sigma_{1} \geq \sigma_{2} \geq \dotsc \geq \sigma_{n} \geq 0$ be eigenvalues of the Gram matrix of vectors $\{\sqrt{d_{ij}} (x_{i} - x_{j})\}_{i<j}$. Then
\[
\frac{\sum_{t \geq r+1} \sigma_{j}}{\sum_{t=1}^{n} \sigma_{j}} \leq \frac{\Phi(SDP)}{\lambda_{r+1}}.
\]
\end{prop}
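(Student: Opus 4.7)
My strategy is to diagonalize the pair $(L_C, L_D)$ so that $XL_DX^T$ and $XL_CX^T$ share a common decomposition, and then to read off the claim as a tail-of-spectrum inequality. First I would whiten with respect to $L_D$: after restricting everything to the range of $L_D$ (which causes no loss of generality, since the vectors $\sqrt{d_{ij}}(x_i-x_j)$ never probe its kernel), I may assume $L_D \succ 0$. Let $X$ be the $d\times n$ matrix with columns $x_i$, and set $\tilde X = XL_D^{1/2}$, so that
\[
XL_DX^T = \tilde X\tilde X^T,\qquad XL_CX^T = \tilde X\bigl(L_D^{-1/2}L_CL_D^{-1/2}\bigr)\tilde X^T.
\]
The matrix in parentheses is symmetric with ordinary eigenvalues $\lambda_1\le\lambda_2\le\dotsc\le\lambda_m$ (exactly the generalized eigenvalues of $(L_C,L_D)$) and an orthonormal eigenbasis $\tilde u_1,\dotsc,\tilde u_m$. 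Setting $w_t = \tilde X\tilde u_t\in\mathbb{R}^d$, the spectral decomposition yields the twin identities
\[
XL_DX^T \;=\; \sum_{t=1}^m w_tw_t^T,\qquad XL_CX^T \;=\; \sum_{t=1}^m \lambda_t\, w_tw_t^T.
\]
Since the nonzero spectrum of a matrix and its Gram are the same, $\sigma_1\ge\dotsc\ge\sigma_n$ are precisely the eigenvalues of the first of these matrices.

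Second, I would use monotonicity of the $\lambda_t$ to extract a tail bound on $\sum_{t>r}\|w_t\|^2$. Taking the trace of the $L_C$ identity,
\[
\lambda_{r+1}\sum_{t>r}\|w_t\|^2 \;\le\; \sum_{t=1}^m\lambda_t\,\|w_t\|^2 \;=\; \tr{XL_CX^T} \;=\; \Phi(SDP)\,\tr{XL_DX^T} \;=\; \Phi(SDP)\,\sum_{s=1}^n\sigma_s,
\]
where the second-to-last equality uses the definition of $\Phi(SDP)$.

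Third, and this is the only step I expect to require any care, I would dominate $\sum_{t>r}\sigma_t$ by $\sum_{t>r}\|w_t\|^2$. Split $A := XL_DX^T = A_1+A_2$ with $A_1 = \sum_{t\le r}w_tw_t^T$ of rank at most $r$ and $A_2 = \sum_{t>r}w_tw_t^T$. Weyl's inequality $\sigma_{i+j-1}(A_1+A_2)\le\sigma_i(A_1)+\sigma_j(A_2)$ applied at $i=r+1$ yields $\sigma_{r+s}(A)\le\sigma_{r+1}(A_1)+\sigma_s(A_2) = \sigma_s(A_2)$ for every $s\ge 1$, because $A_1$ has rank at most $r$. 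Summing over $s\ge 1$,
\[
\sum_{t>r}\sigma_t \;\le\; \sum_{s\ge 1}\sigma_s(A_2) \;=\; \tr{A_2} \;=\; \sum_{t>r}\|w_t\|^2.
\]
(Equivalently, one could invoke the Ky Fan characterization $\sum_{t>r}\sigma_t = \min_{P}\tr{(I-P)A(I-P)}$ over rank-$r$ projections $P$, and test it on the projection onto $\mathrm{span}(w_1,\dotsc,w_r)$.)

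Chaining the two displayed inequalities gives $\lambda_{r+1}\sum_{t>r}\sigma_t \le \Phi(SDP)\sum_{s=1}^n\sigma_s$, and dividing by $\lambda_{r+1}\sum_s\sigma_s$ produces the claim. The only real obstacle is the handling of a degenerate $L_D$ and the matching indexing of $\sigma_t$ (note that the rank of $XL_DX^T$ is at most $n-1$, so indexing up to $n$ is consistent), both of which are minor bookkeeping.
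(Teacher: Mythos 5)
The paper does not actually prove Proposition \ref{prop:laplacian}; it is imported verbatim from \cite{GuruswamiS13} with a citation, so there is no in-paper argument to compare yours against. Your proof is correct and is essentially the standard Guruswami--Sinop argument: the nonzero spectrum of the Gram matrix of $\{\sqrt{d_{ij}}(x_i-x_j)\}_{i<j}$ coincides with that of $A = XL_DX^T = \sum_{i<j} d_{ij}(x_i-x_j)(x_i-x_j)^T$; simultaneous diagonalization of the pencil gives $A = \sum_t w_tw_t^T$ and $XL_CX^T = \sum_t \lambda_t w_tw_t^T$; the trace identity $\tr{XL_CX^T} = \Phi(SDP)\,\tr{XL_DX^T}$ together with $\lambda_t \ge \lambda_{r+1}$ for $t>r$ bounds $\sum_{t>r}\norm{w_t}^2$; and the Weyl/Ky Fan step correctly dominates $\sum_{t>r}\sigma_t$ by $\tr{A_2}$ (note the $w_t$ need not be orthogonal, but you only use the trace of $A_2$, so this is fine). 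The one place I would push back is your dismissal of the degenerate case as ``minor bookkeeping'': if $\mathrm{range}(L_C) \not\subseteq \mathrm{range}(L_D)$ (e.g.\ the demand graph is disconnected but cost edges cross its components), then $XL_CX^T \neq \tilde X\bigl(L_D^{-1/2}L_CL_D^{-1/2}\bigr)\tilde X^T$ --- the right-hand side equals $XPL_CPX^T$ for $P$ the projection onto $\mathrm{range}(L_D)$, and $\tr{XPL_CPX^T} \le \tr{XL_CX^T}$ is \emph{not} automatic since $L_C - PL_CP$ has off-diagonal cross terms of either sign. This is resolved only by adopting the standard convention (as the paper implicitly does by defining $\lambda_i$ solely for $i \le \mathrm{rank}(L_D)$) that the pencil is analyzed on $\mathrm{range}(L_D)$ with $\ker L_D \subseteq \ker L_C$, or by treating the complementary case separately; it deserves a sentence rather than a parenthesis. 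Everything else, including the indexing of $\sigma_t$ up to $n$, checks out.
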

Using these we bound the approximation ratio of our algorithm and prove Theorem \ref{thm:MainThm}.
\begin{theorem} \label{thm:nonunif} [Restatement of Theorem \ref{thm:MainThm}]
\[
\Phi(ALG) \leq \Phi(SDP) \cdot r~ \left(1 -\frac{\Phi(SDP)}{\lambda_{r+1}}\right)^{-2}.
\]
\end{theorem}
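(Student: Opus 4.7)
The plan is to analyze the $\ell_{1}$ embedding $\{y_{i}\}$ defined just before Proposition \ref{prop:distortion} and reduce the claim to bounding the ratio
\[
R \;=\; \frac{\sum_{i<j} c_{ij}\norm{y_{i} - y_{j}}_{1}}{\sum_{i<j} d_{ij}\norm{y_{i} - y_{j}}_{1}}.
\]
Since each coordinate of $y_{i}$ is a positive multiple of $p_{i}^{(k,l)}$, the sweep cuts that Algorithm \ref{sp-cut-alg} considers are exactly the threshold cuts along the coordinates of this embedding. By the standard fact recalled in Section \ref{sec:prelim} that $\ell_{1}$ metrics are sums of cut metrics, and that the minimum sparsity across coordinates beats the ratio of sums, $\Phi(ALG) \leq R$. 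So it suffices to upper-bound $R$.

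The upper bound of Proposition \ref{prop:distortion} handles the numerator immediately:
\[
\sum_{i<j} c_{ij}\norm{y_{i} - y_{j}}_{1} \;\leq\; \sum_{i<j} c_{ij}\norm{x_{i} - x_{j}}^{2} \;=\; \Phi(SDP)\cdot\sum_{k<l} d_{kl}\norm{x_{k} - x_{l}}^{2}.
\]
For the denominator, I would apply the lower bound of Proposition \ref{prop:distortion} and sum over $(i,j)$ weighted by $d_{ij}$, to get
\[
\sum_{i<j} d_{ij}\norm{y_{i} - y_{j}}_{1} \;\geq\; \frac{\sum_{i<j}\sum_{k<l} d_{ij}\,d_{kl}\,\inner{x_{i} - x_{j}}{x_{k} - x_{l}}^{2}}{\sum_{k<l} d_{kl}\norm{x_{k} - x_{l}}^{2}}.
\]
The key identification is that the double sum in the numerator equals $\sum_{t=1}^{n}\sigma_{t}^{2}$, namely $\fnorm{M^{T}M}^{2}$ for $M$ the matrix whose columns are $\{\sqrt{d_{ij}}(x_{i}-x_{j})\}_{i<j}$, while $\sum_{k<l} d_{kl}\norm{x_{k}-x_{l}}^{2} = \sum_{t}\sigma_{t}$. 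This is the step where the work of aggregating over all directions $x_{k}-x_{l}$ pays off: no single direction is good enough on its own, but the $d_{kl}\norm{x_{k}-x_{l}}^{2}$-weighted aggregation produces a spectrally meaningful quantity.

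The main step is then combining Proposition \ref{prop:laplacian} with Cauchy--Schwarz, the same combination that drove Proposition \ref{prop:spectral}. Proposition \ref{prop:laplacian} gives $\sum_{t=1}^{r}\sigma_{t} \geq \bigl(1 - \Phi(SDP)/\lambda_{r+1}\bigr)\sum_{t}\sigma_{t}$, and Cauchy--Schwarz gives $\sum_{t=1}^{r}\sigma_{t}^{2} \geq \frac{1}{r}\bigl(\sum_{t=1}^{r}\sigma_{t}\bigr)^{2}$. Combining,
\[
\sum_{t}\sigma_{t}^{2} \;\geq\; \sum_{t=1}^{r}\sigma_{t}^{2} \;\geq\; \frac{1}{r}\left(1 - \frac{\Phi(SDP)}{\lambda_{r+1}}\right)^{2}\left(\sum_{t}\sigma_{t}\right)^{2}.
\]
Substituting back, the factor $\sum_{k<l} d_{kl}\norm{x_{k}-x_{l}}^{2}$ cancels between numerator and denominator of $R$, yielding
\[
\Phi(ALG) \;\leq\; R \;\leq\; \Phi(SDP)\cdot r\cdot\left(1 - \frac{\Phi(SDP)}{\lambda_{r+1}}\right)^{-2},
\]
as required. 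The only conceptually nontrivial step is the spectral identification together with the Cauchy--Schwarz application; the rest is unwinding definitions and substituting Proposition \ref{prop:distortion}.
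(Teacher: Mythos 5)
Your proposal is correct and follows essentially the same route as the paper's proof: bound $\Phi(ALG)$ by the $\ell_{1}$-embedding ratio, apply the two sides of Proposition \ref{prop:distortion}, identify the resulting sums with $\sum_{t}\sigma_{t}$ and $\sum_{t}\sigma_{t}^{2}$, and finish with Cauchy--Schwarz plus Proposition \ref{prop:laplacian}. The spectral identification and the final chain of inequalities match the paper's argument step for step.
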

\begin{proof}
The guarantee of our algorithm can only be better than the guarantee of this corresponding $\ell_{1}$-embedding.
\begin{align*}
\Phi(ALG) & \leq \frac{\sum_{i<j} c_{ij} \norm{y_{i} - y_{j}}_{1}}{\sum_{i<j} d_{ij} \norm{y_{i} - y_{j}}_{1}} \\
& \leq \frac{\sum_{i<j} c_{ij} \norm{x_{i} - x_{j}}^{2}~ \sum_{k<l} d_{kl} \norm{x_{k} - x_{l}}^{2}}{\sum_{i<j} d_{ij} \sum_{k<l} d_{kl} \inner{x_{i} - x_{j}}{x_{k} - x_{l}}^{2}} \\
& = \frac{\sum_{i<j} c_{ij} \norm{x_{i} - x_{j}}^{2}}{\sum_{i<j} d_{ij} \norm{x_{i} - x_{j}}^{2}} \cdot \frac{\left(\sum_{i<j} d_{ij} \norm{x_{i} - x_{j}}^{2}\right) \left(\sum_{k<l} d_{kl} \norm{x_{k} - x_{l}}^{2}\right)}{\sum_{i<j} \sum_{k<l} d_{ij} d_{kl} \inner{x_{i} - x_{j}}{x_{k} - x_{l}}^{2}} \\
& = \Phi(SDP) \cdot \frac{\left(\sum_{i<j} d_{ij} \norm{x_{i} - x_{j}}^{2}\right)^{2}}{\sum_{i<j} \sum_{k<l} d_{ij} d_{kl} \inner{x_{i} - x_{j}}{x_{k} - x_{l}}^{2}} \\
& = \Phi(SDP) \cdot \frac{\left(\sum_{t=1}^{n} \sigma_{t}\right)^{2}}{\sum_{t=1}^{n} \sigma_{t}^{2}} \\
& \leq \Phi(SDP) \cdot \frac{\left(\sum_{t=1}^{n} \sigma_{t}\right)^{2}}{\sum_{t=1}^{r} \sigma_{t}^{2}} \\
& \leq \Phi(SDP) \cdot r~ \left(\frac{\sum_{t=1}^{n} \sigma_{t}}{\sum_{t=1}^{r} \sigma_{t}}\right)^{2} \qquad \text{by Cauchy-Schwarz inequality} \\
& \leq \Phi(SDP) \cdot r~ \left(1 - \frac{\sum_{t \geq r+1} \sigma_{t}}{\sum_{t=1}^{n} \sigma_{t}}\right)^{-2} \\
& \leq \Phi(SDP) \cdot r~ \left(1 - \frac{\Phi(SDP)}{\lambda_{r+1}}\right)^{-2} \qquad \text{by Proposition \ref{prop:laplacian}} \\
& \leq \Phi^{*} \cdot r~ \left(1 - \frac{\Phi^{*}}{\lambda_{r+1}}\right)^{-2}.
\end{align*}
\end{proof}

\section{Conclusion}
We show that it is possible to get approximation guarantees similar to Guruswami-Sinop for the \sparsestcut problem, but without using higher level Lasserre vectors. One obvious question that arises out of this is whether we can apply these techniques with threshold or independent rounding to give similar guarantees for other problems. Further, can we obtain more directions for projections and sweep cuts using lower levels of the Lasserre hierarchy or eigenvectors of the SDP solution?

\subsection*{Acknowledgement}
The authors would like to thank Prahladh Harsha for many valuable discussions.

\bibliography{ref}

\begin{thebibliography}{10}

\bibitem{AlonMilman85}
N.~Alon and V.~D. Milman.
\newblock $\lambda_1$, isoperimetric inequalities for graphs, and
  superconcentrators.
\newblock {\em Journal of Combinatorial Theory, Series B}, 38(1):73 -- 88,
  1985.

\bibitem{AroraGS13}
Sanjeev Arora, Rong Ge, and Ali~Kemal Sinop.
\newblock {Towards a Better Approximation for Sparsest Cut?}
\newblock In {\em FOCS}, pages 270--279, Los Alamitos, CA, USA, 2013. IEEE
  Computer Society.

\bibitem{AroraLN05}
Sanjeev Arora, James~R. Lee, and Assaf Naor.
\newblock {Euclidean distortion and the Sparsest Cut}.
\newblock In {\em In Proceedings of the 37th Annual ACM Symposium on Theory of
  Computing}, pages 553--562. ACM Press, 2005.

\bibitem{AroraRV09}
Sanjeev Arora, Satish Rao, and Umesh Vazirani.
\newblock {Expander Flows, Geometric Embeddings and Graph Partitioning}.
\newblock {\em J. ACM}, 56(2):5:1--5:37, April 2009.

\bibitem{AwerbuchP90a}
Baruch Awerbuch and David Peleg.
\newblock Sparse partitions (extended abstract).
\newblock In {\em FOCS}, pages 503--513, 1990.

\bibitem{BhattL84}
Sandeep~N. Bhatt and Frank~Thomson Leighton.
\newblock {A Framework for Solving VLSI Graph Layout Problems}.
\newblock {\em J. Comput. Syst. Sci.}, 28(2):300--343, 1984.

\bibitem{CheegerKN09}
Jeff Cheeger, Bruce Kleiner, and Assaf Naor.
\newblock {A $(\log n)^{\Omega(1)}$ Integrality Gap for the Sparsest Cut SDP}.
\newblock In {\em FOCS}, pages 555--564, 2009.

\bibitem{GuruswamiS11}
Venkatesan Guruswami and Ali~Kemal Sinop.
\newblock {Lasserre Hierarchy, Higher Eigenvalues, and Approximation Schemes
  for Graph Partitioning and Quadratic Integer Programming with PSD
  Objectives}.
\newblock In {\em FOCS}, pages 482--491, 2011.

\bibitem{GuruswamiS12b}
Venkatesan Guruswami and Ali~Kemal Sinop.
\newblock {Faster SDP Hierarchy Solvers for Local Rounding Algorithms}.
\newblock In {\em FOCS}, pages 197--206, 2012.

\bibitem{GuruswamiS12a}
Venkatesan Guruswami and Ali~Kemal Sinop.
\newblock Optimal column-based low-rank matrix reconstruction.
\newblock In {\em SODA}, pages 1207--1214, 2012.

\bibitem{GuruswamiS13}
Venkatesan Guruswami and Ali~Kemal Sinop.
\newblock {Approximating Non-Uniform Sparsest Cut Via Generalized Spectra}.
\newblock In {\em SODA}, pages 295--305, 2013.

\bibitem{GuruswamiS13-general}
Venkatesan Guruswami and Ali~Kemal Sinop.
\newblock {Rounding Lasserre SDPs using column selection and spectrum-based
  approximation schemes for graph partitioning and Quadratic IPs}.
\newblock {\em CoRR}, abs/1312.3024, 2013.

\bibitem{Lasserre01}
Jean~B. Lasserre.
\newblock Global optimization with polynomials and the problem of moments.
\newblock {\em SIAM Journal on Optimization}, 11:796--817, 2001.

\bibitem{Laurent03}
Monique Laurent.
\newblock {A Comparison of the Sherali-Adams, Lovász-Schrijver, and Lasserre
  Relaxations for $0$-$1$ Programming}.
\newblock {\em Math of OR}, 28(3):pp. 470--496, 2003.

\bibitem{LeightonR99}
Frank~Thomson Leighton and Satish Rao.
\newblock Multicommodity max-flow min-cut theorems and their use in designing
  approximation algorithms.
\newblock {\em J. ACM}, 46(6):787--832, 1999.

\bibitem{ShiM00}
Jianbo Shi and Jitendra Malik.
\newblock {Normalized Cuts and Image Segmentation}.
\newblock {\em IEEE Trans. Pattern Anal. Mach. Intell.}, 22(8):888--905, 2000.

\bibitem{SinopG07a}
Ali~Kemal Sinop and Leo Grady.
\newblock Uninitialized, globally optimal, graph-based rectilinear shape
  segmentation the opposing metrics method.
\newblock In {\em ICCV}, pages 1--8, 2007.

\end{thebibliography}

\nocite{*}

\end{document}